 \providecommand{\F}{\mathbb{F}}
\title[LCD MDS codes]{Algebraic geometry codes with complementary duals exceed the asymptotic Gilbert-Varshamov bound}
\author[L. Jin ]{Lingfei Jin and Chaoping Xing}
\thanks{Lingfei Jin is with School of Computer Science, Shanghai Key Laboratory of Intelligent Information Processing,  Fudan University, Shanghai 200433, China. She is also with State Key Laboratory of Information Security (Institute of Information Engineering, Chinese Academy of Sciences, Beijing 100093). {\it email:} lfjin@fudan.edu.cn.}  
\thanks{Chaoping Xing is with School of Physical and Mathematical Sciences, Nanyang Technological University. {\it email:} xingcp@ntu.edu.sg.}
\newtheorem{lemma}{Lemma}[section]
\newtheorem{theorem}[lemma]{Theorem}
\newtheorem{cor}[lemma]{Corollary}
\theoremstyle{remark}
\newtheorem{rmk}{Remark}
\renewcommand{\epsilon}{\varepsilon}
\renewcommand{\le}{\leqslant}
\renewcommand{\ge}{\geqslant}
\def\PP{\mathbb{P}}
\def \mL {\mathcal{L}}
\def\wt{{\rm wt}}
\def\Tr{{\rm Tr}}
\newcommand{\Gg}{\gamma}     
\newcommand{\Gd}{\delta}     
\newcommand{\Ge}{\epsilon}
\newcommand{\Gl}{\lambda}
\def \ba {{\bf a}}
\def \bb {{\bf b}}
\def \bc {{\bf c}}
\def \bu {{\bf u}}
\def \bv {{\bf v}}
\def \bo {{\bf 0}}
\def\supp {{\rm supp }}
\def\res {{\rm res}}
\begin{document}

\maketitle

\begin{abstract}
It was shown by Massey that linear complementary dual (LCD for short) codes are asymptotically good. In 2004, Sendrier proved that LCD codes meet the asymptotic Gilbert-Varshamov (GV for short) bound. Until now, the GV bound still remains to be the best asymptotical lower bound for LCD codes. In this paper, we show that an algebraic geometry code over a finite field of even characteristic is equivalent to an LCD code  and consequently there exists a family of LCD codes that are equivalent to algebraic geometry codes and  exceed the asymptotical GV bound.

\end{abstract}

\section{Introduction}
Due to applications in  communication system, storage system, and cryptography, linear complementary dual codes have received much attention since they were introduced. An  LCD code refers to a linear code which has trivial intersection with its dual code. LCD codes have been extensively studied and many results and properties on LCD codes were given  \cite{DKOSS,EY09,ML86,Jin01,TH70,Mass92,YM94,Kim01}.  Some interesting results have been obtained in the literature.  Most of constructions  are based on cyclic codes such as BCH codes \cite{EY09,Mass64,LDL,TH70,YM94}. An optimal family of LCD codes, i.e.,  LCD MDS codes have been studied as well \cite{Jin01}. One major topic on LCD codes is to construct asymptotically good LCD codes. In \cite{Mass92}, Massey showed that  there exist asymptotically good LCD codes by establishing a relationship between LCD codes and linear codes. Meanwhile, he raised a question on whether LCD codes can achieve the Gilbert-Varshamov bound. Later, using the hull dimension spectra of linear codes, Sendrier showed that LCD codes can meet the asymptotic Gilbert-Varshamov bound \cite{Send04}. Until now, the GV bound still remains to be the best asymptotical lower bound.

Recently, Mesnager, Tang and Qi \cite{Mass92} showed that, under two conditions, algebraic geometry codes are equivalent to LCD codes. In general, these two conditions are not easily satisfied for an arbitrary curve. Thus, no asymptotical result is derived from their result. Instead, they presented a few examples of curves such as projective line, elliptic curve and Hermitian curves etc that satisfy their conditions.

In this paper, we show that LCD codes  exceed the Gilbert-Varshamov  bound by constructing a class of LCD codes that are equivalent to algebraic geometry codes. The ideal works as follows:  firstly we show that a linear code can be turned into an LCD code under two conditions; then we do some counting on algebraic geometry codes and show that these two conditions are satisfied for algebraic geometry codes if the underlying function fields
have many rational places and the code alphabet size is not too small; finally, we obtain LCD codes exceeding the asymptotic Gilbert-Varshamov bound by applying two function field towers.

The paper is organized as the followings. In Section II, we introduce some basic definitions and terminologies on LCD codes, function fields, algebraic geometry codes and some function field towers. In Section III, we first show that a linear code can be turned into an LCD code under two conditions. Then some counting on algebraic geometry codes is presented. Finally, we show that  LCD codes are equivalent to algebraic geometry codes and exceed the asymptotic Gilbert-Varshamov bound.

\section{Preliminaries }\label{sec:2}
In this section, we introduce some basic results on LCD codes, function fields, algebraic geometry codes and function field towers.

\subsection{Linear complementary dual}\label{subsec:2.1} For two vectors $\ba=(a_1,\dots,a_n), \bb=(b_1,\dots,b_n)$ in
$\F_q^n$, the Euclidean inner product is defined by
$\langle\ba,\bb\rangle=\sum_{i=1}^na_ib_i$. For a linear code $C$ over $\F_q$,
the { Euclidean dual} of $C$ is defined by
\[C^{\perp}:=\{\bv\in\F_q^n:\;\langle\bv,\bc\rangle=0, \; \forall \bc\in C\}.\]
A linear code is called linear complementary dual  if $C\cap C^{\perp}=\{\bo\}$.

For two vectors $\bu=(u_1,\dots,u_n),\bv=(v_1,\dots,v_n)\in\F_q^n$, denote by  $\bu*\bv$ the Schur product $(u_1v_1,\dots,u_nv_n)$. In particular, we denote by $\bv^2=\bv*\bv$.
For a vector $\ba=(a_1,\dots,a_n)\in(\F_q^*)^n$, denote by $\ba^{-1}$ the vector $(a_1^{-1},\dots,a_n^{-1})\in(\F_q^*)^n$. Furthermore, for a linear code $C$, we denote by $\ba*C$ the linear code $\{\ba*\bc:\; \bc\in C\}.$
It is clear that $C$ and $\ba*C$ are equivalent. Furthermore, it is easy to verify that $(\ba*C)^{\perp}=\ba^{-1}*C^{\perp}$.

\subsection{Function fields and algebraic geometry codes}\label{subsec:2.2}

To construct LCD codes that are equivalent to algebraic geometry codes, we need to
recall some basic definitions and results of algebraic function fields and
algebraic geometry codes. The reader may refer to \cite{Stich08, Ts
Vl} for the detail.

Let $F$ be a function field of genus $g$ defined over $\F_q$.
An element of $F$ is called a function. A place $P$ of $F$ is the maximal ideal in a valuation ring $O$. The residue field $O/P$ is isomorphic to  an extension field over $\F_{q}$. The degree of  $P$ is defined to be $[O/P:\F_q]$. A place of degree one is called rational. The normalized discrete valuation corresponding to a place $P$  is written as $\nu_P$. We use $\PP_F$ to denote the set of all places of $F$.

A divisor $G$ of $F$ is a formal sum $\sum_{P\in\PP_F}m_PP$ with only finitely many nonzero integers $m_P$. The support of $G$ is defined to be $\{P\in\PP_F:\; m_P\neq 0\}$.  The degree of $G$ is defined to be $\sum_{P\in\PP_F}m_P\deg(P)$. Divisor $G=\sum_{P\in\PP_F} m_PP$ is said to be bigger than or equal to divisor $D=\sum_{P\in\PP_F} n_PP$ if $m_P\ge n_P$ for all $P\in\PP_F$. A divisor $G=\sum_{P\in\PP_F}m_PP$ is said to be effective, denoted by $G\ge 0$ if $m_P\ge 0$ for all $P\in\PP_F$. For a nonzero function $f$, the principal divisor ${\rm div}(f)$ is defined to be $\sum_{P\in\PP_F} \nu_P(f)P$.

For a divisor $G$, the Riemann-Roch space associated to $G$ is
\[\mL(G)=\{f\in F\setminus\{0\}:\; {\rm div}(f)+G\geq0\}\cup\{0\}.\]
Then $\mL(G)$ is a finite-dimensional vector space over $\F_q$ and we denote its dimension by $\ell(G)$. By the Riemann-Roch theorem we have
\[\ell(G)\geq \deg(G)+1-g,\]
where the equality holds if $\deg(G)\geq 2g-1$.

Let $P_1,\dots,P_n$ be pairwise distinct rational places of $F$ and let $D=P_1+\dots+P_n$. Choose a
divisor $G$ of $F$ such that ${\rm supp}( G)\bigcap {\rm supp}( D)=\emptyset$. Then $\nu_{P_i}(f)\geq0$ for all $1\leq i\leq n$ for any $f\in \mL(G)$.

Consider the map
\[\Psi:\mL(G)\rightarrow \F_q^n,\quad f\mapsto(f(P_1),\dots,f(P_n)).\]
Obviously the image of  $\Psi$ is a $q$-ary linear code.  This
code is defined to be an
algebraic-geometry code (or AG code for short), denoted by $C_L(D,G)$. If $\deg(G)<n$, then
$\Psi$ is an embedding and we have $\dim(C_L(D,G))=\ell(G)$.
By the Riemann-Roch theorem we can estimate the parameters of an AG code
(see \cite[Theorem 2.2.2]{Stich08}).

\begin{lemma}\label{lem:2.1} $C_L(D,G)$ is an $[n,k,d]$-linear code over $\F_q$ with parameters
\[k=\ell(G)-\ell(G-D),\quad  d\geq n-\deg(G).\]
\begin{itemize}
\item[{\rm (a)}] If $G$ satisfies $g\leq \deg(G)<n$, then
\[k=\ell(G)\geq \deg(G)-g+1.\]
\item[{\rm (b)}] If additionally $2g-2<deg(G)<n$, then $k=\deg(G)-g+1$.
\end{itemize}
\end{lemma}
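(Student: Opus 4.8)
The plan is to analyze the evaluation map $\Psi$ directly and then read off the three parameters from the Riemann-Roch theorem. First I would identify the kernel of $\Psi$. A function $f\in\mL(G)$ lies in $\ker(\Psi)$ precisely when $f(P_i)=0$ for all $i$, which by the definition of the evaluation means $\nu_{P_i}(f)\geq 1$ for each $i$. Combined with the defining inequality ${\rm div}(f)+G\geq 0$ of $\mL(G)$, this is equivalent to ${\rm div}(f)+G-D\geq 0$, that is, $f\in\mL(G-D)$. Hence $\ker(\Psi)=\mL(G-D)$, and the rank-nullity theorem gives $\dim C_L(D,G)=\ell(G)-\ell(G-D)$, which is the asserted value of $k$.

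Next I would bound the minimum distance. The key input is that a Riemann-Roch space $\mL(A)$ can be nonzero only if $\deg(A)\geq 0$: any nonzero $f\in\mL(A)$ satisfies ${\rm div}(f)+A\geq 0$, so $\deg(A)=\deg({\rm div}(f)+A)\geq 0$ since a principal divisor has degree zero. Now take a nonzero codeword $\bc=\Psi(f)$ of weight $w$, and let $D'$ be the sum of the $n-w$ places among $P_1,\dots,P_n$ at which $f$ vanishes. Then $f$ is a nonzero element of $\mL(G-D')$, so the nonvanishing criterion forces $\deg(G)-(n-w)\geq 0$, i.e. $w\geq n-\deg(G)$. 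Taking the minimum over all nonzero codewords yields $d\geq n-\deg(G)$.

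For part (a) I would invoke the hypothesis $\deg(G)<n$: then $\deg(G-D)=\deg(G)-n<0$, so by the same nonvanishing criterion $\mL(G-D)=\{0\}$ and $\ell(G-D)=0$. Thus $k=\ell(G)$, and the Riemann-Roch inequality $\ell(G)\geq\deg(G)+1-g$ gives $k\geq\deg(G)-g+1$; the assumption $\deg(G)\geq g$ guarantees this bound is at least $1$, so the code is nontrivial. For part (b), the stronger hypothesis $\deg(G)>2g-2$, i.e. $\deg(G)\geq 2g-1$, places $G$ in the range where Riemann-Roch holds with equality, whence $\ell(G)=\deg(G)+1-g$ and therefore $k=\deg(G)-g+1$ exactly.

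No step here is a genuine obstacle, as this is the standard parameter estimate for AG codes. The only point demanding care is the kernel computation and the distance bound, both of which hinge on translating the condition ``$f$ vanishes at $P_i$'' into the valuation statement $\nu_{P_i}(f)\geq 1$ and thence into an inequality between divisors. Once that dictionary is in place, everything reduces to degree counting together with two applications of the Riemann-Roch theorem.
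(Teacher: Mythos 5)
Your proof is correct and is essentially the standard argument from \cite[Theorem 2.2.2]{Stich08}, which is all the paper itself offers (it cites Stichtenoth rather than giving a proof): kernel $\mL(G-D)$ via the valuation dictionary, the weight bound from the fact that $\mL(A)\neq\{0\}$ forces $\deg(A)\geq 0$, and the two applications of Riemann-Roch for (a) and (b). No gaps; the only implicit point is that the distance claim presupposes a nonzero codeword, which is the usual convention.
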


Now we discuss the Euclidean dual of the AG code
$C_L(D,G)$.

The differential space of $F$ is defined to be
\[\Omega_F:=\{fdx:f\in F\},\]
where $\nu_Q(x)$ is coprime with $q$ for some place $Q$.
This is a one-dimensional space over $F$. For a  place $P$ and a function $t$ with $\nu_P(t)=1$, we define $\nu_P(fdt)=\nu_P(f)$. The divisor associated with a nonzero differential $\omega$ is defined to be ${\rm div}(w)=\sum_{P\in\PP_F}\nu_P(\omega)P$. Such a divisor is called a canonical divisor. Every   canonical divisor has degree $2g-2$, where $g$ is the genus of $F$. Furthermore, any two canonical divisors are equivalent.  Now, if $P$ is a rational place and $\nu_P(fdt)\ge -1$,  we define the residue of $fdt$ at $P$ to be $(ft)(P)$, denoted by $\res_{P}(fdt)$.

 For a divisor $G$, we define the $\F_q$-vector space
\[\Omega(G)=\{\omega\in \Omega_F\setminus\{0\}:\;{\rm div}(\omega)\ge G\}\cup\{0\}\]
and denote the dimension of $\Omega(G)$ by $i(G)$. Then one has the following relationship
\[i(G)=\ell(K-G),\]
where $K$ is a canonical divisor.

 We define the code $C_\Omega(D,G)$ as
\[C_\Omega(D,G)=\{(\res_{P_1}(\omega),\dots,\res_{P_n}(\omega)):\;\omega\in
\Omega(G-D)\}.\]
We have the following results \cite[Theorem 2.2.7 and Proposition 2.2.10]{Stich08}
\begin{lemma}\label{lem:2.2}  $C_\Omega(D,G)$ is an $[n,k^{\perp},d^{\perp}]$-linear code over $\F_q$ with parameters
\[k^{\perp}=i(G-D)-i(G),\quad  d^{\perp}\geq \deg( G)-(2g-2).\]
\begin{itemize}
\item[{\rm (a)}] If $G$ satisfies $2g-2\leq \deg(G)<n$, then
\[k^{\perp}= n-\deg(G)+g-1.\]
\item[{\rm (b)}] There exists a nonzero differential $\eta\in\Omega_F$ such that $\nu_{P_i}(\eta)=-1$ for $i=1,\dots,n$ and $C_\Omega(D,G)=\bv*C_L(D,D-G+{\rm div}(\eta))$ for some $\bv\in(\F_q^*)^n$.
\end{itemize}
\end{lemma}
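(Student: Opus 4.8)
The plan is to analyze the residue evaluation map
\[\gamma\colon \Omega(G-D)\to \F_q^n,\qquad \omega\mapsto\big(\res_{P_1}(\omega),\dots,\res_{P_n}(\omega)\big),\]
whose image is by definition $C_\Omega(D,G)$. First I would pin down its kernel. For $\omega\in\Omega(G-D)$ one has $\nu_{P_i}(\omega)\geq\nu_{P_i}(G-D)=-1$ for every $i$, since $\Supp(G)\cap\Supp(D)=\emptyset$; and because a differential with a pole of order exactly one has nonzero residue, the vanishing of all $n$ residues forces $\nu_{P_i}(\omega)\geq 0$ for each $i$, i.e.\ $\omega\in\Omega(G)$. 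Hence $\ker\gamma=\Omega(G)$, and rank--nullity gives $k^\perp=\dim\Omega(G-D)-\dim\Omega(G)=i(G-D)-i(G)$. For the distance, I would take a nonzero codeword $\gamma(\omega)$ of weight $w$ and let $T$ be the set of its nonzero coordinates; at the coordinates outside $T$ the residue vanishes, so $\nu_{P_i}(\omega)\geq 0$ there, whence ${\rm div}(\omega)\geq G-\sum_{i\in T}P_i$. As ${\rm div}(\omega)$ is a canonical divisor of degree $2g-2$, comparing degrees yields $2g-2\geq\deg(G)-w$, that is $w\geq\deg(G)-(2g-2)$, the asserted bound on $d^\perp$.

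For part (a) I would pass through the Riemann--Roch duality $i(A)=\ell(K-A)$, so that
\[k^\perp=i(G-D)-i(G)=\ell(K-G+D)-\ell(K-G).\]
Here $\deg(K-G+D)=(2g-2)-\deg(G)+n>2g-2$ whenever $\deg(G)<n$, so the Riemann--Roch theorem gives $\ell(K-G+D)=\deg(K-G+D)-g+1=n-\deg(G)+g-1$. On the other hand $\deg(K-G)=(2g-2)-\deg(G)<0$ once $\deg(G)>2g-2$, forcing $i(G)=\ell(K-G)=0$; the boundary value $\deg(G)=2g-2$ is the usual limiting case. Combining the two gives $k^\perp=n-\deg(G)+g-1$. (Alternatively, once part (b) is in hand one may simply apply Lemma~\ref{lem:2.1} to the code $C_L(D,G')$ with $G'=D-G+{\rm div}(\eta)$, whose degree is $n-\deg(G)+2g-2$.)

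The most delicate part is (b), so I would devote the most care to it. I would first construct a differential $\eta$ with $\nu_{P_i}(\eta)=-1$ for all $i$: starting from any nonzero differential $\omega_0$, the (strong) approximation theorem produces a function $f_0$ with $\nu_{P_i}(f_0)=-1-\nu_{P_i}(\omega_0)$ at the finitely many places $P_1,\dots,P_n$, and then $\eta=f_0\omega_0$ has a simple pole at each $P_i$. In particular every residue $v_i\eqdef\res_{P_i}(\eta)$ is nonzero, so $\bv=(v_1,\dots,v_n)\in(\F_q^*)^n$; the zeros and poles of $\eta$ away from $D$ are harmless, being absorbed into ${\rm div}(\eta)$. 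Setting $G'=D-G+{\rm div}(\eta)$, I would check that multiplication by $\eta$ is an $\F_q$-isomorphism $\mL(G')\to\Omega(G-D)$: for $f\in\mL(G')$ one computes ${\rm div}(f\eta)={\rm div}(f)+{\rm div}(\eta)\geq -G'+{\rm div}(\eta)=G-D$, so $f\eta\in\Omega(G-D)$, while the inverse $\omega\mapsto\omega/\eta$ lands in $\mL(G')$ by the same bookkeeping. Since $\nu_{P_i}(G')=0$, each such $f$ is regular at $P_i$ and $\res_{P_i}(f\eta)=f(P_i)v_i$; thus under this isomorphism $\gamma$ becomes $f\mapsto\bv*(f(P_1),\dots,f(P_n))$, and evaluating on all of $\Omega(G-D)$ yields exactly $C_\Omega(D,G)=\bv*C_L(D,G')$. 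I expect the one genuinely delicate step to be arranging $\eta$ to have simple poles at all $n$ evaluation places simultaneously, so that $\bv$ has no zero coordinate, together with verifying that $\eta$-multiplication is onto $\Omega(G-D)$ and not merely into it.
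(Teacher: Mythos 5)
Your proof is correct, and the paper offers no proof of this lemma at all --- it is quoted directly from Stichtenoth \cite[Theorem 2.2.7 and Proposition 2.2.10]{Stich08}, whose argument yours reproduces: the kernel computation $\ker\gamma=\Omega(G)$ via nonvanishing of residues at simple poles, the degree count on ${\rm div}(\omega)$ for the distance bound, and the isomorphism $f\mapsto f\eta$ from $\mL(D-G+{\rm div}(\eta))$ onto $\Omega(G-D)$ for part (b). One small caveat on (a): at the boundary $\deg(G)=2g-2$ one can have $i(G)=\ell(K-G)=1$ (namely when $G$ is canonical), so the formula $k^{\perp}=n-\deg(G)+g-1$ genuinely requires the strict inequality $2g-2<\deg(G)$ as in the cited reference; your hedge about the ``limiting case'' correctly senses this, but the issue is a slip in the paper's statement rather than in your argument, and it is harmless since the paper only ever applies the lemma with $\deg(G)\geq 2g-1$.
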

To distinguish two classes of algebraic geometry codes $C_L(D,G)$ and $C_\Omega(D,G)$, we call $C_L(D,G)$ a functional (AG) code and $C_\Omega(D,G)$ a differential (AG) code.

\subsection{Two function field towers}\label{subsec:2.3} In this subsection, we introduce two function field towers with many rational places. These function fields will be used to construct LCD algebraic geometry codes exceeding the Gilbert-Varshamov bound. The reader may refer to \cite{BBGS,GS} for the details on these two towers.

\noindent {\bf The first tower.} The first tower is defined over $\F_q$ with $q=r^2$ for a prime power $r$. The function field is $F=\F_q(x_1,x_2,\dots,x_t)$, where $x_i$ are transcendental elements over $\F_q$ satisfying the following recursive equations
\[x_{i+1}^r+x_{i+1}=\frac{x_i^r}{x_i^{r-1}+1}\]
for $i=1,2,\dots,t-1$. Let $N(F)$ and $g(F)$ denote the number  of rational places of $F$ and the genus of $F$, respectively. Then one has
\begin{equation}\label{eq:2.1}N(F)\ge r^{t-1}(q-1)+1\ge (\sqrt{q}-1)g(F)+1.\end{equation}
Furthermore, $g(F)\rightarrow\infty$ as $t\rightarrow\infty$.

\noindent {\bf The second tower.} The second tower is defined over $\F_q$ with $q=2^{2m+1}$ for an integer $m\ge 1$. For an integer $j$, denote by $\Tr_j(T)=T+T^2+\cdots+T^{2^{j-1}}$. The function field is $F=\F_q(x_1,x_2,\dots,x_t)$, where $x_i$ are transcendental elements over $\F_q$ satisfying  the following recursive equations
\[\Tr_m\left(\frac{x_{i+1}}{x_i^{2^{m+1}}}\right)+\Tr_{m+1}\left(\frac{x_{i+1}^{2^{m}}}{x_i}\right)=1\]
for $i=1,2,\dots,t-1$. Let $N(F)$ and $g(F)$ denote the number  of rational places of $F$ and the genus of $F$, respectively. Then one has $g(F)\rightarrow\infty$ as $t\rightarrow\infty$, and
\begin{equation}\label{eq:2.2}\lim_{t\rightarrow\infty}\frac{N(F)}{g(F)-1}\ge \frac{2(2^{m+1}-1)(2^m-1)}{3(2^m-1)+1}.\end{equation}

\subsection{Gilbert-Varshamov bound}\label{subsec:2.4}
The Gilbert-Varshamov bound is a benchmark for good codes. It has been shown that with a high probability, a  linear code achieves the Gilbert-Varshamov bound.
It was proved that LCD codes can attain the following asymptotical GV bound in \cite{Send04}.
\begin{lemma}[Asymptotical Gilbert-Varshamov bound]
For every $q$ and $\delta\in(0,1-1/q)$, there exists a family $\{C_i=[n_i,k_i,d_i]\}$ of $q$-ary LCD code such that $n_i\rightarrow\infty$ as $i\rightarrow\infty$, $R=\lim_{i\rightarrow\infty}\frac{k_i}{n_i}$ and $\Gd=\lim_{i\rightarrow\infty}\frac{d_i}{n_i}$ satisfy
\[R\ge 1-H_q(\delta),\]
where $H_q(x)=x\log_q(q-1)-x\log x-(1-x)\log_q(1-x)$ is the $q$-ary entropy function.
\end{lemma}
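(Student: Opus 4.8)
The plan is to prove the statement by an expurgated Gilbert--Varshamov counting argument carried out inside the family of LCD codes rather than inside all linear codes. Two ingredients drive the argument. The first is elementary: the number of $[n,k]$ codes over $\F_q$ is the Gaussian binomial $\binom{n}{k}_q$, the number of such codes containing a fixed nonzero $\bv\in\F_q^n$ is $\binom{n-1}{k-1}_q$, and $\binom{n}{k}_q/\binom{n-1}{k-1}_q=(q^n-1)/(q^k-1)$. The second, and the only substantial input, is that a positive fraction of all $[n,k]$ codes are LCD: there is a constant $c_q>0$, independent of $n$ and $k$, with
\[\#\{\text{LCD }[n,k]\text{ codes over }\F_q\}\ \ge\ c_q\binom{n}{k}_q.\]
Granting this density bound, the theorem follows by averaging.

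For the averaging step, fix $d$ and let $\mathcal{L}$ be the set of LCD $[n,k]$ codes. I would double count the pairs $(C,\bv)$ with $C\in\mathcal{L}$, $\bv\in C$, and $0<\wt(\bv)<d$. Summing first over $\bv$ and bounding the number of LCD codes through each $\bv$ crudely by the number of \emph{all} codes through $\bv$ gives
\[\sum_{C\in\mathcal{L}}\#\{\bv\in C:0<\wt(\bv)<d\}\ \le\ \binom{n-1}{k-1}_q\sum_{i=1}^{d-1}\binom{n}{i}(q-1)^i.\]
If the right-hand side is strictly below $|\mathcal{L}|$, then some $C\in\mathcal{L}$ contains no nonzero word of weight less than $d$, i.e.\ is an LCD $[n,k]$ code of distance at least $d$. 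Using $|\mathcal{L}|\ge c_q\binom{n}{k}_q$, this sufficient condition reads $\sum_{i=1}^{d-1}\binom{n}{i}(q-1)^i< c_q\,(q^n-1)/(q^k-1)$. Taking a family with $n_i\to\infty$ and fixing the target so that $k_i/n_i\to 1-H_q(\delta)$, I would apply $\tfrac1{n_i}\log_q(\cdot)$ and use the standard estimate $\tfrac1n\log_q\sum_{i\le \delta n}\binom{n}{i}(q-1)^i\to H_q(\delta)$ valid for $\delta<1-1/q$; the constant $c_q$ then contributes $\tfrac1{n_i}\log_q c_q\to0$ and disappears in the limit, so the condition is satisfied for large $n_i$. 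This produces LCD codes of relative distance at least $\delta$ and rate tending to $1-H_q(\delta)$, which is the asserted bound.

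It remains to secure the density bound on $\mathcal{L}$, which is the crux. I would use Massey's criterion: a code $C$ with full-rank generator matrix $G$ is LCD if and only if $\det(GG^\top)\neq0$. This is a property of the code, not of the chosen generator matrix, since replacing $G$ by $SG$ with $S\in GL_k(\F_q)$ multiplies $\det(GG^\top)$ by $\det(S)^2\neq0$. Thus the LCD condition says exactly that the standard symmetric bilinear form on $\F_q^n$ restricts nondegenerately to the $k$-dimensional subspace $C$, so counting LCD codes amounts to counting nondegenerate $k$-subspaces of a nondegenerate symmetric form. The classical theory of bilinear forms over finite fields yields an exact such count whose ratio to $\binom{n}{k}_q$ is bounded below by a positive constant $c_q$ uniformly in $n$; equivalently, in the language of \cite{Send04}, the hull-dimension spectrum places a positive proportion of its mass at hull dimension $0$.

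The main obstacle is precisely this last density estimate in even characteristic, the case of interest for the rest of the paper. When $\chr\F_q=2$ the symmetric form can behave like an alternating form on certain subspaces (the diagonal of the Gram matrix no longer controls nondegeneracy), so the subspace count departs from the odd-characteristic orthogonal-geometry count and must be handled through the appropriate even-characteristic classification of forms. Fortunately the averaging argument needs only a positive constant $c_q$ and not its exact value, so any crude uniform lower bound on the proportion of nondegenerate $k$-subspaces suffices to close the proof.
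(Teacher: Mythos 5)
The paper offers no proof of this lemma: it is quoted verbatim from Sendrier \cite{Send04}, so there is nothing internal to compare against. Measured against that source, your outline follows essentially the same route Sendrier does --- show that LCD codes occupy a positive proportion of all $[n,k]$ codes and then run an expurgated Gilbert--Varshamov average over that subfamily. Your averaging half is correct as far as it goes, with one small imprecision: if you literally take $k_i/n_i\to 1-H_q(\delta)$, both sides of your sufficient condition $\sum_{i<d}\binom{n}{i}(q-1)^i< c_q(q^n-1)/(q^k-1)$ have the same exponential order $q^{nH_q(\delta)}$, so the strict inequality is not automatic ``for large $n_i$''; you must choose $k_i=\lfloor n_i(1-H_q(\delta))\rfloor-O(1)$, with the constant absorbing $\log_q(1/c_q)$ and the polynomial slack in the volume bound. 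That is routine and does not change the limiting rate.

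The genuine gap is the density bound $\#\{\text{LCD }[n,k]\text{ codes}\}\ge c_q\binom{n}{k}_q$, which you correctly identify as the crux and then leave as an appeal to ``the classical theory of bilinear forms over finite fields.'' That bound is the entire mathematical content of the lemma --- it is precisely Sendrier's hull-dimension-spectrum computation --- and nothing in your write-up establishes it. The even-characteristic case you flag is not a formality: in characteristic $2$ one has $\langle x,x\rangle=(\sum_i x_i)^2$, so the standard form restricted to the even-weight hyperplane is alternating, and consequently \emph{every} odd-dimensional subspace of that hyperplane is degenerate (hence not LCD); the nondegenerate-subspace count therefore splits by cases and does not follow from the odd-characteristic orthogonal-geometry formulas. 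A ``crude uniform lower bound'' does exist, but it has to be exhibited, either by the even-characteristic classification of symmetric forms or by Sendrier's recursive computation of the number of codes with hull dimension $0$. Until that is supplied, the argument reduces the lemma to an unproved statement of comparable depth rather than proving it.
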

\section{Construction}\label{sec:3}
From now onwards, we assume that $\F_q$ has characteristic equal to $2$.  In this first subsection, we show that a linear code can be turned into an LCD code under two conditions. In the second subsection, we do some counting on algebraic geometry codes and show that these two conditions are satisfied for algebraic geometry codes if the underlying function fields have many rational points and $q$ is not too small. In the third subsection, we  apply the two function field towers in Subsection \ref{subsec:2.3} to obtain LCD algebraic geometry codes that exceed the asymptotic Gilbert-Varshamov bound.
\subsection{A general construction of LCD codes}\label{subsec:3.1} This subsection shows that under certain conditions, a linear code can be turned into an equivalent code which is LCD.
For a  $q$-ary $[n,k]$-linear code $C$ and a subset $I\subseteq\{1,2,\dots,n\}$, define the sets
\begin{equation}\label{eq:3.1}
S_I(C)=\{\bc\in C:\; \supp(\bc)=I\};\quad S_I(C^{\perp})=\{\bc\in C^{\perp}:\; \supp(\bc)=I\}
\end{equation}

Furthermore, for a codeword $\bu\in C$, define the set
\begin{equation}\label{eq:3.2}
T_{\bu}(C)=\{\bv\in (\F_q^*)^n:\; \bv^2*\bu\in C^{\perp}\}.
\end{equation}

\begin{lemma}\label{lem:3.1} Assume that $C$ is a $q$-ary $[n,k,d]$-linear code. If there exists a function $\Gg_q(n)$ such that
\begin{equation}\label{eq:3.3}
q^{-k}(q-1)^n-(n-d)\Gg_q(n)2^nq^{k-n}\left(q-1\right)^n-(n-d)2^n(\Gg_q(n))^2(q-1)^{n-d}>0\end{equation}
and
\begin{equation}\label{eq:3.4}
|S_I(C)|\le q^{-(n-k)}\left(q-1\right)^w+ \Gg_q(n);\quad |S_I(C^{\perp})|\le q^{-k}\left(q-1\right)^w+ \Gg_q(n)\end{equation}
for any  subset $I\subseteq\{1,2,\dots,n\}$ with $|I|=w\ge 1$, then  there exists a vector $\bv\in (\F_q^*)^n$ such that $\bv*C$ is an LCD code.
\end{lemma}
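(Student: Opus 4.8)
The plan is to count the ``bad'' multipliers and show they do not exhaust $(\F_q^*)^n$. Since $(\bv*C)^{\perp}=\bv^{-1}*C^{\perp}$, a nonzero vector lies in $(\bv*C)\cap(\bv*C)^{\perp}$ precisely when $\bv*\bc=\bv^{-1}*\bc'$ for some $\bc\in C$ and $\bc'\in C^{\perp}$, that is, when $\bv^2*\bc\in C^{\perp}$ for some $\bc\neq\bo$. Writing $m(\bv)=|\{\bc\in C\setminus\{\bo\}:\ \bv^2*\bc\in C^{\perp}\}|$, the code $\bv*C$ is therefore LCD if and only if $m(\bv)=0$. As each $m(\bv)$ is a non-negative integer, it suffices to prove
\[\sum_{\bv\in(\F_q^*)^n}m(\bv)<(q-1)^n,\]
for then the $m(\bv)$ cannot all be at least $1$ and some multiplier is good.

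I would compute this sum by double counting the pairs $(\bc,\bv)$ and reorganising through the sets $T_{\bu}(C)$ and $S_I$ from \eqref{eq:3.1}--\eqref{eq:3.2}. Exchanging the order of summation gives $\sum_{\bv}m(\bv)=\sum_{\bu\in C\setminus\{\bo\}}|T_{\bu}(C)|$. This is where $\chr(\F_q)=2$ enters: for fixed $\bu$ with $I=\supp(\bu)$ and $|I|=w$, the vector $\bv^2*\bu$ has support exactly $I$, and since $x\mapsto x^2$ is a bijection of $\F_q^*$, every element of $S_I(C^{\perp})$ has exactly $(q-1)^{n-w}$ preimages $\bv\in(\F_q^*)^n$ (the $w$ coordinates inside $I$ are forced, the $n-w$ outside are free). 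Hence $|T_{\bu}(C)|=|S_I(C^{\perp})|(q-1)^{n-w}$, and grouping the codewords of $C$ by support yields
\[\sum_{\bv}m(\bv)=\sum_{\emptyset\neq I\subseteq\{1,\dots,n\}}|S_I(C)|\,|S_I(C^{\perp})|\,(q-1)^{n-|I|}.\]

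The delicate step, which I expect to be the main obstacle, is to peel off from this sum a clean main term equal to $(q-1)^n-q^{-k}(q-1)^n$; the leading quantity $q^{-k}(q-1)^n$ of \eqref{eq:3.3} is then exactly the slack available below $(q-1)^n$. The naive move of bounding both $|S_I(C)|$ and $|S_I(C^{\perp})|$ by \eqref{eq:3.4} is too wasteful---it liberates only a slack of order $q^{-n}(q-1)^n$, far below the budget $q^{-k}(q-1)^n$. Instead I would bound only one factor, $|S_I(C^{\perp})|\le q^{-k}(q-1)^{|I|}+\Gg_q(n)$, and leave $|S_I(C)|$ untouched, exploiting the exact identity $\sum_{\emptyset\neq I}|S_I(C)|=|C|-1=q^k-1$ (the sets $S_I(C)$ partition the nonzero codewords). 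This produces
\[\sum_{\bv}m(\bv)\le q^{-k}(q-1)^n(q^k-1)+\Gg_q(n)\sum_{\emptyset\neq I}|S_I(C)|(q-1)^{n-|I|},\]
whose first summand is precisely $(q-1)^n-q^{-k}(q-1)^n$.

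It then remains to bound the error $\Gg_q(n)\sum_{\emptyset\neq I}|S_I(C)|(q-1)^{n-|I|}$ by the two quantities subtracted in \eqref{eq:3.3}. Only supports with $|I|\ge d$ contribute, since a nonzero codeword has weight at least $d$. Applying $|S_I(C)|\le q^{k-n}(q-1)^{|I|}+\Gg_q(n)$ from \eqref{eq:3.4} and counting supports weight by weight---at most $2^n$ supports for each of the weights $w=d,\dots,n$, with $(q-1)^{n-w}\le(q-1)^{n-d}$---bounds this error by a quantity of the form $(n-d)\Gg_q(n)2^nq^{k-n}(q-1)^n+(n-d)2^n(\Gg_q(n))^2(q-1)^{n-d}$, matching the subtracted terms of \eqref{eq:3.3}. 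Consequently \eqref{eq:3.3} forces $\sum_{\bv}m(\bv)<(q-1)^n$, and the averaging argument of the first paragraph delivers a vector $\bv\in(\F_q^*)^n$ for which $\bv*C$ is LCD. Apart from the asymmetric treatment of the two Schur-support factors in the main term, the rest is bookkeeping with \eqref{eq:3.4} and the weight distribution of $C$.
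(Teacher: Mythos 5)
Your proposal is correct and takes essentially the same route as the paper: your averaging quantity $\sum_{\bv}m(\bv)$ equals the paper's union bound $\sum_{\bu\in C\setminus\{\bo\}}|T_{\bu}(C)|$, both use the characteristic-$2$ squaring bijection to get $|T_{\bu}(C)|=(q-1)^{n-w}|S_I(C^{\perp})|$, and both perform the same asymmetric bookkeeping (bound $|S_I(C^{\perp})|$ first so that $\sum_I|S_I(C)|=q^k-1$ yields the main term $(q-1)^n-q^{-k}(q-1)^n$, then bound $|S_I(C)|$ in the error term).
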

\begin{proof} Let $T_{\bu}(C)$ be defined in \eqref{eq:3.2}. Assume that the cardinality of the union $\cup_{\bu\in C\setminus\{\bo\}}T_{\bu}(C)$ is less than $(q-1)^n$, i.e.,
\begin{equation}\label{eq:3.5}
\left|\bigcup_{\bu\in C\setminus\{\bo\}}T_{\bu}(C)\right|<(q-1)^n,
\end{equation}
 then one can find a vector $\ba\in (\F_q^*)^n$ such that $\ba\not\in \cup_{\bu\in C\setminus\{\bo\}}T_{\bu}(C)$. This implies that
  $\ba^2*\bu\not\in C^{\perp}$, i.e., $\ba*\bu\not\in\ba^{-1}*C^{\perp}=(\ba*C)^{\perp}$ for all $\bu\in C\setminus\{\bo\}$. Hence, in this case,  the code $\ba*C$ is LCD. Thus, it is sufficient to show that the inequality \eqref{eq:3.5} holds.

 For a codeword $\bu\in C\setminus\{\bo\}$, let $I$ be the support of $\bu$. Consider the set
 \begin{equation}\label{eq:3.6}
R_{\bu}(C)=\{\bv^2*\bu:\; \bv\in T_{\bu}(C)\}.
\end{equation}
It is clear that $R_{\bu}(C)\subseteq S_I(C^{\perp})$. Furthermore, we have the relation $|T_{\bu}(C)|=(q-1)^{n-w}|R_{\bu}(C)|$, where $w=|I|$.
Thus, by \eqref{eq:3.4}, we have
 \begin{equation}\label{eq:3.7}
|T_{\bu}(C)|=(q-1)^{n-w}|R_{\bu}(C)|\le (q-1)^{n-w}|S_I(C^{\perp})|\le q^{-k}\left(q-1\right)^n+\Gg_q(n)(q-1)^{n-w}.
\end{equation}
Denote by $X$ the set $\cup_{\bu\in C\setminus\{\bo\}}T_{\bu}(C)$. Then
\begin{eqnarray*}|X|&\le& \sum_{\bu\in C\setminus\{\bo\}}|T_{\bu}(C)|
\le  \sum_{\bu\in C\setminus\{\bo\}}\left(q^{-k}\left(q-1\right)^{n}+\Gg_q(n)(q-1)^{n-\wt(\bu)}\right)\\
&=&q^{-k}\left(q-1\right)^n(q^k-1)+\sum_{w=d}^n\sum_{|I|=w}\Gg_q(n)|S_I(C)|(q-1)^{n-w}\\
&\le &(q-1)^n-q^{-k}(q-1)^n+\Gg_q(n)\sum_{w=d}^n{n\choose w}\left(q^{k-n}\left(q-1\right)^w+\Gg_q(n)\right)(q-1)^{n-w}\\
&\le &(q-1)^n-q^{-k}(q-1)^n+\Gg_q(n)2^n\sum_{w=d}^n\left(q^{k-n}\left(q-1\right)^n+\Gg_q(n)(q-1)^{n-w}\right)\\
&\le &(q-1)^n-q^{-k}(q-1)^n+(n-d)\Gg_q(n)2^nq^{k-n}\left(q-1\right)^n+(\Gg_q(n))^22^n\sum_{w=d}^n(q-1)^{n-w}\\
&\le &(q-1)^n-q^{-k}(q-1)^n+(n-d)\Gg_q(n)2^nq^{k-n}\left(q-1\right)^n+(n-d)2^n(\Gg_q(n))^2(q-1)^{n-d}\\
&<&(q-1)^n\quad \mbox{(by \eqref{eq:3.3})}.
\end{eqnarray*}
This completes the proof.
\end{proof}

\begin{rmk}{\rm When the characteristic of $\F_q$ is odd, then we have the relation $|T_{\bu}(C)|=2^w(q-1)^{n-w}|R_{\bu}(C)|$ in contrast with the first equality of  \eqref{eq:3.7} in the  proof of Lemma \ref{lem:3.1}. The extra factor $2^w$ destroys our inequality \eqref{eq:3.5}. However, in this case, $|R_{\bu}(C)|$ should be much smaller than $|S_I(C^{\perp})|$ since not every vectors in $S_I(C^{\perp})$ has the form $\bv^2*\bu$ for a fixed $\bu$. We are not sure if we can analyze the relation between $|R_{\bu}(C)|$ and $|S_I(C^{\perp})|$ properly so that we still have the inequality \eqref{eq:3.7} in this case.
}
\end{rmk}
\subsection{Counting on AG codes}\label{subsec:3.1}
The main purpose of this subsection is to show that algebraic geometry codes satisfy the conditions \eqref{eq:3.3} and \eqref{eq:3.4}.

\begin{lemma}\label{lem:3.2} Let $C$ be the algebraic geometry code $C_L(D, G)$ of length $n$ defined over a function field of genus $g$. Assume that $ g\le \Gl n$ with a constant $\lambda>0$. Put $\Gg_q(n)=2(2q^{\Gl })^n$. If $2g-1\le \deg(G)\le n-1$, then \[
|S_I(C)|\le q^{-(n-k)}\left(q-1\right)^w+\Gg_q(n);\quad |S_I(C^{\perp})|\le q^{-k}\left(q-1\right)^w+\Gg_q(n)\]
for any  subset $I\subseteq\{1,2,\dots,n\}$ with $|I|=w\ge 1$.
\end{lemma}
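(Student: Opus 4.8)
The plan is to translate the support-counting problem into dimensions of Riemann--Roch spaces and then compare with the ``generic'' count predicted by the Riemann--Roch theorem. Write $I^c=\{1,\dots,n\}\setminus I$, and for $S\subseteq I$ set $D_{S^c}=\sum_{j\notin S}P_j$. Since $\deg(G)\le n-1<n$, the evaluation map $\Psi$ is an embedding, so a codeword $\Psi(f)\in C_L(D,G)$ is supported inside $S$ exactly when $f$ vanishes at every $P_j$ with $j\notin S$, i.e.\ $f\in\mL(G-D_{S^c})$. Hence the number $A(S)$ of codewords with support contained in $S$ equals $q^{\ell(G-D_{S^c})}$, and M\"obius inversion gives
\[
|S_I(C)|=\sum_{S\subseteq I}(-1)^{w-|S|}\,q^{\ell(G-D_{S^c})}.
\]

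Next I would isolate the main term. The Riemann--Roch theorem gives $\ell(G-D_{S^c})=\deg(G-D_{S^c})+1-g+i(G-D_{S^c})$, where $\deg(G-D_{S^c})=\deg(G)-(n-|S|)$ depends only on $|S|$, and the index of speciality $i(G-D_{S^c})$ vanishes once $\deg(G-D_{S^c})\ge 2g-1$, that is, once $|S|\ge t_0:=n-\deg(G)+2g-1$. Replacing each exponent by its Riemann--Roch value $\deg(G)-n+|S|+1-g$ and using $\sum_{S\subseteq I}(-1)^{w-|S|}q^{|S|}=(q-1)^w$ together with $k=\ell(G)=\deg(G)+1-g$ (Lemma~\ref{lem:2.1}(b)), the generic part collapses exactly to $q^{-(n-k)}(q-1)^w$. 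The residual error is
\[
E=\sum_{\substack{S\subseteq I\\ |S|<t_0}}(-1)^{w-|S|}q^{\deg(G-D_{S^c})+1-g}\bigl(q^{i(G-D_{S^c})}-1\bigr),\qquad |E|\le\sum_{\substack{S\subseteq I\\ |S|<t_0}}q^{\ell(G-D_{S^c})},
\]
so it remains to prove $|E|\le\Gg_q(n)$.

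Bounding this defect sum is the step I expect to be the main obstacle. The naive estimate $\ell\le\deg+1$ is far too weak here (it produces an error of order $2^n(q^\lambda)^{2n}$), and the flat bound $\ell\le g$ on special divisors is also fatal, since it introduces a spurious factor $2g\approx 2\lambda n$ from the number of defect degrees. The remedy is Clifford's theorem: for the special divisors occurring here we have $0\le\deg(G-D_{S^c})\le 2g-2$, hence $\ell(G-D_{S^c})\le 1+\tfrac12\deg(G-D_{S^c})$. Splitting off the terms with $\deg(G-D_{S^c})<0$ (each contributing $q^\ell=1$, at most $2^n$ altogether) and writing $s=\deg(G)-n+|S|\in\{0,\dots,2g-2\}$ for the rest, Clifford turns the defect into a geometric series with ratio $q^{1/2}$:
\[
|E|\le 2^n+q\sum_{s=0}^{2g-2}\binom{w}{\,n-\deg(G)+s\,}q^{s/2}\le 2^n+2^n\,\frac{q^{g+1/2}}{q^{1/2}-1}.
\]
Because $q\ge 4$ in the towers of Subsection~\ref{subsec:2.3}, the factor $q^{1/2}/(q^{1/2}-1)$ is bounded by a small constant, and the hypothesis $g\le\lambda n$ gives $q^{g}\le(q^\lambda)^n$; matching the resulting constant times $(2q^\lambda)^n$ against $\Gg_q(n)=2(2q^\lambda)^n$ is then a routine constant computation.

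Finally, the estimate for $C^\perp$ follows from the same argument applied to a second functional code. By Lemma~\ref{lem:2.2}(b), $C^\perp=C_\Omega(D,G)=\bv*C_L(D,G')$ with $G'=D-G+{\rm div}(\eta)$, and since multiplication by $\bv\in(\F_q^*)^n$ preserves supports, $|S_I(C^\perp)|=|S_I(C_L(D,G'))|$. A direct check gives $\deg(G')=n-\deg(G)+2g-2$, which again lies in $[2g-1,n-1]$ under our hypotheses, while $\dim C_L(D,G')=\ell(G')=n-k$. The identical computation then yields the main term $q^{-(n-(n-k))}(q-1)^w=q^{-k}(q-1)^w$ and the same error bound $\Gg_q(n)$, completing the proof.
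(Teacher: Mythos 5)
Your proof follows essentially the same route as the paper's: inclusion--exclusion (M\"obius inversion) over the vanishing coordinates, Riemann--Roch to extract the main term $q^{k-n}(q-1)^w$, a Clifford-type bound on the special-divisor range to control the error by a constant times $2^nq^{g}\le 2^n q^{\lambda n}$, and Lemma~\ref{lem:2.2}(b) to transfer the estimate to $C^\perp$ viewed as a functional code of dimension $n-k$. The only difference is cosmetic---the paper bounds each special Riemann--Roch space flatly by $q^{g}$ instead of summing your geometric series---and both arguments are equally loose about the exact constant in front of $(2q^{\lambda})^n$ versus the stated $\Gg_q(n)=2(2q^{\lambda})^n$, which is immaterial since only the order of growth is used downstream.
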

\begin{proof} Let us prove the inequality on $|S_I(C)|$ first. Without loss of generality, we may assume that $I=\{1,2,\dots,w\}$.
Let $m$ denote the degree of $G$.
For a codeword $(f({P_1}),\dots,f({P_n}))$ in $C_L(D, G)$ with $f\in
\mL\left(G\right)$, the $i$-th coordinator $f({P_i})$ is zero if and only if $f\in
\mL\left(G-P_i\right)$. Thus,
\begin{equation}\label{eq:3.8}S_I(C)=\left\{(f(P_1),\dots,f(P_n)):\;f\in\mL\left(G-\sum_{j=n-w+1}^nP_j\right)\setminus\bigcup_{i=1}^{w}\mL\left(G-\sum_{j=n-w+1}^nP_j-P_i\right)\right\}.\end{equation}
If $w<n-m$, then $S_I(C)= \emptyset$ and the desired result is clear. Now assume that $w\ge n-m$.
If $m-(n-w)\le 2g-2$, i.e., $w\le n-m+2g-2$, then $|S_I(C)|\le \left|\mL\left(G-\sum_{j=n-w+1}^nP_j\right)\right|\le q^g<\Gg_q(n)$. The desired result holds in this case as well.

Now assume that  $w\ge n-m+2g-1$.
 We denote   $\mL(G-\sum_{j=n-w+1}^nP_j-P_i)$ by $A_i$.  By the equation \eqref{eq:3.8} and the inclusion-exclusion principle, we have
  \begin{eqnarray*}
|S_I(C)|&=&\left|\mL\left(G-\sum_{j=n-w+1}^nP_j\right)\right|-\sum_{i=1}^{w}|A_i|+\sum_{h,k}|A_h\cap A_k|+ \dots+(-1)^{w}\sum_{i_1,\dots,i_w}\left|\bigcap_{j=1}^{w}A_{i_j}\right|\\
&=&\sum_{j=0}^{m-n+w-2g+1}(-1)^j{w\choose j}q^{m-n+w-g+1-j}+\sum_{j=m-n+w-2g+2}^{w}(-1)^j\sum_{i_1,\dots,i_{j}}\left|\bigcap_{r=1}^{j}A_{i_r}\right|\\
 &=&q^{m-g+1+w-n}\left(1-\frac1q\right)^w+c=q^{k-n}\left(q-1\right)^w+c,
\end{eqnarray*}
where
 \begin{eqnarray*}c&=&\sum_{j=m-n+w-2g+2}^{w}(-1)^j\sum_{i_1,\dots,i_{j}}\left|\bigcap_{r=1}^{j}A_{i_r}\right|-\sum_{j=m-n+w-2g+2}^{w}(-1)^j{w\choose j}q^{m-n+w-j-g+1}\\
 &\le&\sum_{j=m-n+w-2g+2}^{w}\sum_{i_1,\dots,i_{j}}\left|\bigcap_{r=1}^{j}A_{i_r}\right|+\sum_{j=m-n+w-2g+2}^{w}{w\choose j}q^{m-n+w-j-g+1}\\
 &=&q^{g}\sum_{j=0}^{w}{w\choose j}+q^{g-1}\sum_{j=0}^{w}{w\choose j}\le 2q^{g}\times 2^{w+1}\le q^{\Gl n}2^{n}=\Gg_q(n).\end{eqnarray*}
This proves the inequality on $S_I(C)$.

By Lemma \ref{lem:2.2}, there exists a vector $\bv\in(\F_q^*)^n$  and a nonzero differential $\eta$ such that $C^{\perp}=C_\Omega(D,G)=\bv*C_L(D,D-G+(\eta))$. The desired result on $S_I(C^{\perp})$ follows from the fact that $S_I(C^{\perp})=S_I(\bv*C_L(D,D-G+(\eta))=S_I(C_L(D,D-G+(\eta))$ since we have proved the result for the functional codes $C_L$.
\end{proof}
Lemma \ref{lem:3.2} shows that the inequalities \eqref{eq:3.4} are satisfied for algebraic geometry codes. Next, we are going to show that the inequality \eqref{eq:3.3} is also satisfied for algebraic geometry codes.

\begin{lemma}\label{lem:3.3} Let $C$ be the algebraic geometry code $C_L(D, G)$ of length $n$ defined over a function field of genus $g$ with $2g-1\le \deg(G)\le n-1$. Assume that $ g\le \Gl n$ with a constant $\lambda>0$. Put $\Gg_q(n)=2(2q^{\Gl })^n$. Assume that $\lim_{n\rightarrow\infty}\frac kn=R$ and $\lim_{n\rightarrow\infty}\frac dn=\Gd$. If
 \begin{equation}\label{eq:3.1a}
 (1-2R-\Gl)\log_2q-2>0,\quad \Gd\log_2(q-1)-R\log_2q-3-2\Gl>0,
 \end{equation}
 then for all sufficiently large $n$, one has \[q^{-k}(q-1)^n-(n-d)\Gg_q(n)2^nq^{k-n}\left(q-1\right)^n-(n-d)2^n(\Gg_q(n))^2(q-1)^{n-d}>0. \]
\end{lemma}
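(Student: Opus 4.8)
The plan is to show that the single positive term dominates the two subtracted terms once $n$ is large. Write the quantity to be bounded below as $M_0-M_1-M_2$, where
\[
M_0=q^{-k}(q-1)^n,\quad M_1=(n-d)\Gg_q(n)2^nq^{k-n}(q-1)^n,\quad M_2=(n-d)2^n(\Gg_q(n))^2(q-1)^{n-d}.
\]
It suffices to prove that $M_1/M_0\to 0$ and $M_2/M_0\to 0$ as $n\to\infty$: then for all sufficiently large $n$ both ratios are smaller than $1/2$, whence $M_1+M_2<M_0$ and the desired inequality follows. The only inputs beyond arithmetic are the substitution $\Gg_q(n)=2^{n+1}q^{\Gl n}$ and the two hypotheses in \eqref{eq:3.1a}.

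For the first ratio I would cancel the common factor $(q-1)^n$ and absorb $q^{-k}$, obtaining
\[
\frac{M_1}{M_0}=(n-d)\,\Gg_q(n)\,2^nq^{2k-n}=(n-d)\,2^{2n+1}\,q^{\Gl n+2k-n}.
\]
Taking $\log_2$, dividing by $n$, and using $k/n\to R$ together with $\tfrac1n\log_2\big(2(n-d)\big)\to 0$, the exponential growth rate is
\[
\lim_{n\to\infty}\frac1n\log_2\frac{M_1}{M_0}=2+(\Gl+2R-1)\log_2 q=2-\big((1-2R-\Gl)\log_2 q\big),
\]
which is strictly negative by the first inequality in \eqref{eq:3.1a}. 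Hence $M_1/M_0\to 0$.

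The second ratio is handled identically. Cancelling as before and using $(\Gg_q(n))^2=2^{2n+2}q^{2\Gl n}$ gives
\[
\frac{M_2}{M_0}=(n-d)\,2^n(\Gg_q(n))^2(q-1)^{-d}q^{k}=(n-d)\,2^{3n+2}\,q^{2\Gl n+k}(q-1)^{-d}.
\]
Now $\log_2$, division by $n$, and the limits $k/n\to R$ and $d/n\to\Gd$ yield growth rate
\[
\lim_{n\to\infty}\frac1n\log_2\frac{M_2}{M_0}=3+(2\Gl+R)\log_2 q-\Gd\log_2(q-1)=-\big(\Gd\log_2(q-1)-R\log_2 q-3-2\Gl\big),
\]
which is strictly negative by the second inequality in \eqref{eq:3.1a}. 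Hence $M_2/M_0\to 0$, and combining the two estimates completes the argument.

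I do not expect a serious obstacle: the lemma reduces to matching the exponents of $2$, $q$ and $(q-1)$ against the two conditions of \eqref{eq:3.1a}. The only care needed is to observe that the polynomial prefactor $(n-d)$ and the multiplicative constants absorbed into $\Gg_q(n)$ contribute only $o(n)$ to the base-$2$ logarithm, so they do not affect the exponential growth rates computed above; the hypothesis $2g-1\le\deg(G)\le n-1$ inherited from Lemma~\ref{lem:3.2} is what keeps the quantities $k$ and $d$, and hence the limits $R$ and $\Gd$, meaningful. Once both growth rates are seen to be negative, the conclusion for all sufficiently large $n$ is immediate.
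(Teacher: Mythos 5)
Your decomposition is exactly the paper's: the paper sets $A=M_0/M_1$ and $B=M_0/M_2$, shows that $\frac1n\log_2A$ and $\frac1n\log_2B$ have positive limits equal to the two quantities in \eqref{eq:3.1a}, and concludes $1-1/A-1/B>0$ for large $n$; you run the identical computation on the reciprocals and conclude they tend to $0$. So in structure and substance the two arguments coincide.

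There is, however, one step that does not hold as written, and it is the same slip the paper itself makes. Your expansion of the second ratio is correct: since $(\Gg_q(n))^2=2^{2n+2}q^{2\Gl n}$ contributes $2\Gl\log_2q$ to the exponential rate, one gets $\lim_{n\to\infty}\frac1n\log_2(M_2/M_0)=3+(2\Gl+R)\log_2q-\Gd\log_2(q-1)$. But your final identification of this with $-\big(\Gd\log_2(q-1)-R\log_2q-3-2\Gl\big)$ is false unless $\log_2q=1$: the two sides differ by $2\Gl(\log_2q-1)$. Because $2\Gl\log_2q\ge 2\Gl$, the second hypothesis in \eqref{eq:3.1a} as stated does not force your (correctly computed) growth rate to be negative; the condition actually needed is $\Gd\log_2(q-1)-R\log_2q-3-2\Gl\log_2q>0$. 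The paper's own proof of Lemma~\ref{lem:3.3} commits the identical error --- its displayed limit for $\frac1n\log_2B$ carries $-2\Gl\log_2q$ on one line and silently becomes $-2\Gl$ on the next --- so this is best read as a typo in \eqref{eq:3.1a} that you have faithfully reproduced rather than a defect peculiar to your argument; still, as written, the claimed implication from the stated hypothesis fails for $q>2$, which is precisely the regime the paper cares about. The first ratio is sound in both treatments, since there the $\Gl\log_2q$ term is genuinely absorbed into $(1-2R-\Gl)\log_2q$.
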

\begin{proof} By \eqref{eq:3.1a}, there exists $\Ge>0$ such that  \begin{equation}\label{eq:3.1b}(1-2R-\Gl)\log_2q-2\ge \Ge,\quad \Gd\log_2(q-1)-R\log_2q-3-2\Gl\ge \Ge.\end{equation}
Put
\[A=\frac{q^{-k}(q-1)^n}{(n-d)\Gg_q(n)2^nq^{k-n}\left(q-1\right)^n}=\frac{q^{-k}}{(n-d)\Gg_q(n)2^nq^{k-n}}.\]
Then we have
\begin{eqnarray*}
\lim_{n\rightarrow\infty}\frac{\log_2A}{n}&=&\lim_{n\rightarrow\infty}\left(-\frac kn\log_2q-\frac1n\log_2(n-d)-\frac{2n+1}n-\Gl\log_2q-\frac{k-n}n\log_2q\right)\\
&=&(1-2R-\Gl)\log_2q-2\ge \Ge.
\end{eqnarray*}
 Put
\[B=\frac{q^{-k}(q-1)^n}{(n-d)2^n(\Gg_q(n))^2(q-1)^{n-d}}=\frac{q^{-k}(q-1)^d}{(n-d)2^n(\Gg_q(n))^2}.\]
Then we have
\begin{eqnarray*}
\lim_{n\rightarrow\infty}\frac{\log_2B}{n}&=&\lim_{n\rightarrow\infty}\left(-\frac kn\log_2q+\frac dn\log_2(q-1)-\frac1n\log_2(n-d)-\frac{3n+2}n-2\Gl\log_2q\right)\\
&=&\Gd\log_2(q-1)-R\log_2q-3-2\Gl\ge \Ge.
\end{eqnarray*}
Thus, we have $A>2^{\Ge n/2}$ and $B>2^{\Ge n/2}$ for all sufficiently large $n$. This implies that $1/A<1/2$ and $1/B<1/2$ for sufficiently large $n$. Therefore, one has $1-1/A-1/B>0$ for all sufficiently large $n$, i.e.,
\[q^{-k}(q-1)^n-(n-d)\Gg_q(n)2^nq^{k-n}\left(q-1\right)^n-(n-d)2^n(\Gg_q(n))^2(q-1)^{n-d}>0. \]
The proof is completed.
\end{proof}
\subsection{LCD algebraic geometry codes}
Define
\[\Gl_q=\left\{\begin{array}{ll}
\frac1{2^r-1}&\mbox{if $q=2^{2r}$ for some integer $r\ge 1$}\\
&\\
\frac{3(2^r-1)+1}{2(2^{r+1}-1)(2^r-1)}&\mbox{if $q=2^{2r+1}$ for some integer $r\ge 1$}.
\end{array}\right.\]
By Subsection \ref{subsec:2.3}, there exists a function field family $\{F/\F_q\}$ such that $N(F)\ge g(F)/\Gl_q+1$ and $g(F)\rightarrow\infty$.
\begin{theorem}\label{thm:3.4} Let $q$ be a power of $2$.
If \begin{equation}\label{eq:3.10}2\Gl_q<R<\min\left\{\frac12\left(1-\Gl_q-\frac2{\log_2q}\right),\frac{\log_2(q-1)-({1+\log_2(q-1)})\Gl_q-3}{\log_2q(q-1)}\right\},\end{equation} or
\begin{equation}\label{eq:3.1c}1-\min\left\{\frac12\left(1-\Gl_q-\frac2{\log_2q}\right),\frac{\log_2(q-1)-({1+\log_2(q-1)})\Gl_q-3}{\log_2q(q-1)}\right\}<R<1-2\Gl_q,\end{equation}
then there exist LCD codes that are equivalent to algebraic geometry codes with rate $R$ and relative minimum distance $\Gd$ achieving the Tsfasman-Vludat bound asymptotically, i.e.,
\begin{equation}\label{eq:3.1d}R+\Gd\ge 1-\Gl_q.\end{equation}
\end{theorem}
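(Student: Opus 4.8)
The plan is to produce the required codes directly as algebraic geometry codes drawn from the two towers of Subsection~\ref{subsec:2.3}, and then invoke Lemmas~\ref{lem:3.1}--\ref{lem:3.3} to replace each such code by an equivalent LCD code. First I would fix the tower according to the parity of the exponent of $q$: if $q=2^{2r}$ use the first tower, for which \eqref{eq:2.1} gives $g(F)\le\Gl_q\,(N(F)-1)$ with $\Gl_q=1/(2^r-1)$; if $q=2^{2r+1}$ use the second tower, for which \eqref{eq:2.2} gives $\limsup_t g(F)/N(F)\le\Gl_q$ with $\Gl_q=\frac{3(2^r-1)+1}{2(2^{r+1}-1)(2^r-1)}$. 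Along such a family with $g=g(F)\to\infty$, pick a rational place $P_0$, set $D=P_1+\dots+P_n$ using the remaining $n=N(F)-1$ rational places, and take $G=mP_0$ with $m=\deg(G)$ to be chosen, so that $\supp(G)\cap\supp(D)=\emptyset$. I take $\Gl=\Gl_q$ in the first case and any fixed $\Gl>\Gl_q$ in the second, so that $g\le\Gl n$ holds (for all large $n$ in the second case).

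Given a target rate $R$ in the interval \eqref{eq:3.10}, I would choose $m=\lfloor(R+\Gl_q)n\rfloor$, so that by Lemma~\ref{lem:2.1}(b) the code $C=C_L(D,G)$ has $k=m-g+1$, whence $\lim_{n\to\infty}k/n=R$, while $d\ge n-m$ gives $\lim_{n\to\infty}d/n\ge 1-R-\Gl_q$; I fix $\Gd:=1-R-\Gl_q$. The Tsfasman--Vl\u{a}du\c{t} bound \eqref{eq:3.1d} is then automatic, since
\[R+\Gd=\lim_{n\to\infty}\frac{k+d}{n}\ge\lim_{n\to\infty}\frac{(m-g+1)+(n-m)}{n}=1-\Gl_q.\]
The lower bound $R>2\Gl_q$ in \eqref{eq:3.10} ensures $m\ge 2g-1$, and the upper bounds in \eqref{eq:3.10} give $R<1-\Gl_q$, hence $m\le n-1$; so the hypotheses $2g-1\le\deg(G)\le n-1$ of Lemmas~\ref{lem:3.2} and~\ref{lem:3.3} are met.

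With these parameters Lemma~\ref{lem:3.2} (applied with $\Gg_q(n)=2(2q^{\Gl})^n$) yields the two inequalities \eqref{eq:3.4}, and it remains only to secure \eqref{eq:3.3}, which is exactly the conclusion of Lemma~\ref{lem:3.3} once its hypotheses \eqref{eq:3.1a} hold. Substituting $\Gl=\Gl_q$ and $\Gd=1-R-\Gl_q$ into \eqref{eq:3.1a} and rearranging turns the first inequality into $R<\tfrac12(1-\Gl_q-2/\log_2q)$ and the second into the remaining upper bound appearing in \eqref{eq:3.10}; thus every $R$ in the open interval \eqref{eq:3.10} satisfies \eqref{eq:3.1a}. (In the second-tower case one uses $\Gl$ slightly above $\Gl_q$; since \eqref{eq:3.10} is a strict, hence open, condition, one lets $\Gl\downarrow\Gl_q$ and keeps \eqref{eq:3.1a} valid.) Now \eqref{eq:3.3} and \eqref{eq:3.4} are in force, so Lemma~\ref{lem:3.1} furnishes $\bv\in(\F_q^*)^n$ with $\bv*C$ an LCD code; being a coordinatewise rescaling of $C$, it is equivalent to $C=C_L(D,G)$ and inherits the rate $R$ and relative distance $\Gd$ satisfying \eqref{eq:3.1d}. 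This settles the range \eqref{eq:3.10}.

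For the range \eqref{eq:3.1c} I would pass to duals, noting that \eqref{eq:3.1c} is precisely the reflection $R\mapsto 1-R$ of \eqref{eq:3.10}: given $R'$ in \eqref{eq:3.1c}, the value $R:=1-R'$ lies in \eqref{eq:3.10}. Construct the LCD code $\bv*C$ of rate $R$ as above and form its dual $(\bv*C)^{\perp}=\bv^{-1}*C^{\perp}$. Since the condition $C\cap C^{\perp}=\{\bo\}$ is symmetric in $C$ and $C^{\perp}$, a code is LCD iff its dual is, so $(\bv*C)^{\perp}$ is again LCD; it is equivalent to $C^{\perp}=C_\Omega(D,G)$, an algebraic geometry code by Lemma~\ref{lem:2.2}(b). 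Its rate is $1-R=R'$, and by Lemma~\ref{lem:2.2} its relative distance is at least $m/n-2\Gl_q\to R-\Gl_q$, so $(1-R)+(R-\Gl_q)=1-\Gl_q$ and \eqref{eq:3.1d} holds once more. I expect the main obstacle to be not any single estimate but the careful matching of parameters: verifying that the substitution $\Gd=1-R-\Gl_q$, $\Gl=\Gl_q$ converts \eqref{eq:3.1a} into exactly the two competing upper bounds of \eqref{eq:3.10}, and ensuring the genus bound $g\le\Gl n$ persists along the second tower, where only the asymptotic ratio \eqref{eq:2.2} is available, which I would handle by exploiting the openness of the hypotheses.
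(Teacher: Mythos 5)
Your proposal is correct and follows essentially the same route as the paper: construct $C_L(D,mP_\infty)$ over the Garcia--Stichtenoth/Bassa--Beelen--Garcia--Stichtenoth towers, verify \eqref{eq:3.4} via Lemma~\ref{lem:3.2} and \eqref{eq:3.1a} (hence \eqref{eq:3.3}) via Lemma~\ref{lem:3.3}, invoke Lemma~\ref{lem:3.1} to get the equivalent LCD code, and dispose of the range \eqref{eq:3.1c} by duality. If anything you are slightly more careful than the paper: your choice $m=\lfloor(R+\Gl_q)n\rfloor$ is the one that actually makes the code's rate equal $R$ (the paper writes $m=\lfloor Rn\rfloor$ yet asserts rate $R$), and your handling of the second tower, where only the asymptotic ratio \eqref{eq:2.2} is available, by taking $\Gl$ slightly above $\Gl_q$ fills a detail the paper leaves implicit.
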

\begin{proof} If the desired result is true under the inequalities \eqref{eq:3.10}, then by considering dual codes, the desired result is also true under the inequalities \eqref{eq:3.1c}. Thus, we prove the result by only assuming the inequalities \eqref{eq:3.10}.

Let $\{F/\F_q\}$ be a function field family such that $N(F)\ge g(F)/\Gl_q+1$ and $g(F)\rightarrow\infty$. Denote $g(F)$ simply by $g$. Put $n=N(F)-1$ and choose $n+1$ distinct rational places $P_1,P_2,\dots,P_n,P_{\infty}$. Put $D=\sum_{i=1}^nP_i$ and $G=mP_{\infty}$ with $m=\lfloor Rn\rfloor$. Then $2g-1\le m\le n-1$. Consider the algebraic geometry code $C=C_L(D,G)$ and its dual $C^{\perp}=C_\Omega(D,G)$. Put $\Gd=1-R-\Gl_q$.  It is clear that  the code $C$ has rate $R$ and relative minimum distance at least $\Gd$. Thus, by Lemma \ref{lem:3.1}, it is sufficient to show  that $C$ satisfies the inequalities \eqref{eq:3.3} and \eqref{eq:3.4} for all sufficiently large $n$.

By Lemma \ref{lem:3.2}, the inequality \eqref{eq:3.4} is satisfied for $\Gl=\Gl_q$. Now by  Lemma \ref{lem:3.3}, to show the inequality \eqref{eq:3.3} for all sufficiently large $n$, it is sufficient to show that the inequalities \eqref{eq:3.1a} are satisfied for all sufficiently large $n$.

It is easy to verify that the second inequality of \eqref{eq:3.10} is equivalent to the  inequalities of \eqref{eq:3.1a} for $\Gl=\Gl_q$. This completes the proof.
\end{proof}

\begin{cor} When $q\ge 128$ is a  power of $2$, then there exist LCD codes that are equivalent to algebraic geometry codes and exceed the asymptotic Gilbert-Varshamov bound in two intervals of $(0,1)$.
\end{cor}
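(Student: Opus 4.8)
The plan is to deduce the corollary directly from Theorem~\ref{thm:3.4} by comparing the Tsfasman--Vl\u adut bound \eqref{eq:3.1d} it supplies against the asymptotic Gilbert--Varshamov bound. First I would recall that Theorem~\ref{thm:3.4} produces, for every rate $R$ in the low-rate window \eqref{eq:3.10} and---via its dual-code argument---in the high-rate window \eqref{eq:3.1c}, a family of LCD codes equivalent to algebraic geometry codes whose relative minimum distance satisfies $R+\Gd\ge 1-\Gl_q$. A code on this line strictly beats the Gilbert--Varshamov bound exactly when its point lies above $R=1-H_q(\Gd)$, i.e.\ when $1-\Gl_q-\Gd>1-H_q(\Gd)$. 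Thus the corollary reduces to a purely numerical statement: for $q\ge 128$ a power of two, \emph{each} of the two rate windows must contain a rate at which $\varphi(\Gd):=H_q(\Gd)-\Gd>\Gl_q$.

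Next I would analyse $\varphi$ on $(0,1-1/q)$. Since $H_q''(\Gd)=-\bigl(\Gd(1-\Gd)\ln q\bigr)^{-1}<0$, the function $\varphi$ is concave with $\varphi(0^+)=0$, a unique maximum at $\Gd^{*}=(q-1)/(2q-1)$, and $\varphi(1-1/q)=1/q$. As $\Gl_q\approx 1/\sqrt q$ satisfies $1/q<\Gl_q<\max_\Gd\varphi(\Gd)$ for the relevant $q$, the set $\{\Gd:\varphi(\Gd)>\Gl_q\}$ is a single open interval $(\Gd_-,\Gd_+)$. Translating through $R=1-\Gl_q-\Gd$, the rates at which the Tsfasman--Vl\u adut line beats Gilbert--Varshamov form the single interval $(1-\Gl_q-\Gd_+,\,1-\Gl_q-\Gd_-)$, and the task becomes showing that this interval meets both \eqref{eq:3.10} and \eqref{eq:3.1c}.

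The remaining step is a verification using the explicit values $\Gl_q=1/(2^r-1)$ for $q=2^{2r}$ and the displayed quantity for $q=2^{2r+1}$. Writing $M$ for the minimum appearing in \eqref{eq:3.10}, I would first observe that \eqref{eq:3.10} is nonempty only when $2\Gl_q<M$; a direct check shows this fails at $q=64$ but holds from $q=128$ on, which pins down the threshold. For such $q$ the low-rate window sits near rate $2\Gl_q$, well inside $(1-\Gl_q-\Gd_+,1-\Gl_q-\Gd_-)$, so those codes beat Gilbert--Varshamov; the high-rate window \eqref{eq:3.1c} is its reflection under $R\mapsto 1-R$ and is produced by the LCD dual codes of Lemma~\ref{lem:2.2}. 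Because $\Gl_q\to 0$ while the beating interval widens as $q$ grows, once the inclusions are certified at the smallest admissible $q$ in each congruence class they persist for all larger $q$, yielding the two promised intervals of $(0,1)$.

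The main obstacle is the \emph{high-rate} window, and its tightness near the threshold. The low-rate inclusion is comfortable, but certifying that \eqref{eq:3.1c} reaches above the Gilbert--Varshamov curve requires $1-\Gl_q-\Gd_->1-M$, i.e.\ $\Gd_-<M-\Gl_q$, where $\Gd\approx M-\Gl_q$ is small and $\varphi(\Gd)-\Gl_q$ is close to zero for $q$ just above $128$. I therefore expect the bulk of the work to be sharp estimation of $H_q$ near this small distance in order to locate the lower crossing $\varphi(\Gd_-)=\Gl_q$ precisely enough to decide the inequality; this delicate entropy estimate, rather than any structural difficulty, is where the argument is genuinely forced, since everything else follows immediately from Theorem~\ref{thm:3.4}.
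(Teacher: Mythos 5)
Your plan is the same argument the paper gives: reduce to checking that the Tsfasman--Vl\u{a}du\c{t} line \eqref{eq:3.1d} beats the Gilbert--Varshamov curve on some $\delta$-interval and that the two rate windows \eqref{eq:3.10} and \eqref{eq:3.1c} of Theorem~\ref{thm:3.4} each meet the corresponding rate range. The paper disposes of both steps with ``it is well known'' and ``it is straightforward to verify''; you add the concavity analysis of $H_q(\delta)-\delta$ (so the beating set is a single interval) and, importantly, you single out the high-rate window as the place where the verification is tight. That instinct is correct, and in fact the check you defer does not go through at the stated threshold: for $q=128$ one has $\lambda_q=11/105\approx 0.1048$, and the first term of the minimum in \eqref{eq:3.10} already forces $M\le \tfrac12\bigl(1-\lambda_q-\tfrac27\bigr)\approx 0.3048$, so the high-rate window \eqref{eq:3.1c} corresponds to relative distances $\delta=1-R-\lambda_q\le M-\lambda_q\le 0.2$, where $H_{128}(\delta)-\delta\le H_{128}(0.2)-0.2\approx 0.1028<\lambda_{128}$. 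Hence the dual-code interval does not exceed the Gilbert--Varshamov bound at $q=128$ (it does for larger $q$, e.g.\ $q=256$), so the ``two intervals'' conclusion cannot be certified at the smallest admissible $q$ as both you and the paper assume. This is a defect in the corollary's stated threshold rather than in your strategy, but it means the final numerical step of your proposal---which you correctly identify as where the argument is genuinely forced---would have to either raise the threshold on $q$ or sharpen the constants in Lemma~\ref{lem:3.3} before it can be completed.
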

\begin{proof} It is well known that \eqref{eq:3.1d} exceeds the asymptotic Gilbert-Varshamov bound in an interval $\Gd\in (a,b)\subset (0,1)$ for  $q\ge 128$. Now it is straightforward to verify that there are two subintervals $(a_1,b_1)$ and $(a_1,b_1)$ of $(a,b)$ such that, for $q\ge 128$, the rate $R=1-\Gd-\Gl_q$ lies in the range of \eqref{eq:3.10} for $\Gd\in (a_1,b_1)$, and in the range of \eqref{eq:3.1c} for $\Gd\in (a_2,b_2)$, respectively. This completes the proof.
\end{proof}

\end{document}